\newtheorem{thm}{Theorem}
\newtheorem{lem}[thm]{Lemma}
\newtheorem{cor}[thm]{Corollary}
\journal{Discrete Applied Mathematics}
\begin{document}

\begin{frontmatter}

\title{Bounding the Feedback Vertex Number of Digraphs in Terms of Vertex Degrees}

\author{Hermann Gruber}

\address{knowledgepark AG \\ 
Leonrodstr. 68, M\"unchen, Germany}
\ead{info@hermann-gruber.com}
\begin{abstract}
The Tur{\'a}n bound~\cite{Turan41} is a famous result in graph theory, 
which relates the independence number 
of an undirected graph to its edge density.  
Also the Caro--Wei inequality~\cite{Caro79,Wei81}, 
which gives a more refined bound in terms of the vertex degree 
sequence of a graph, might be regarded today as a classical result. 
We show how these statements can be generalized to directed   
graphs, thus yielding a bound on directed 
feedback vertex number in terms of vertex outdegrees and in 
terms of average outdegree, respectively.
\end{abstract}

\begin{keyword}
directed feedback vertex number, Caro--Wei inequality, 
feedback set, acyclic set, induced DAG
\MSC 05C20 \sep 68R10 
\end{keyword}

\end{frontmatter}

\section{Introduction}
Not only in discrete mathematics, generalizing existing concepts and proofs 
has always been a guiding theme for research. The great mathematician 
Henri Poincar\'{e} even considered this as the {\it leitmotiv} 
of all mathematics.\footnote{Poincar\'{e}'s original phrasing was more poetic: ``La math\'{e}matique est l'art de donner le m\^{e}me nom \`{a} des choses diff\'{e}rentes.''~\cite[p.~29]{Poincare08}.}
 
In particular, many results from graph theory were generalized 
to weighted graphs, digraphs, or 
hypergraphs. Sometimes, providing such generalizations is an easy exercise; 
in other cases, the main difficulty lies in formulating the 
``right generalization'' of the original theorem.  
An additional obstacle is imposed if the result we intend to generalize
allows for several proofs or equivalent reformulations. Then 
there are many roads to potential generalizations to explore, 
and selecting the most promising one can be difficult. However,
once the proper generalizations of the used notions are found, 
the more general proof often runs very much along the same lines.      

As we shall see, one such example is the Tur\'{a}n bound~\cite{Turan41}, 
which gives the number of edges that a graph of order~$n$ can have when 
forbidding $k$-cliques as subgraphs. It allows for many different 
proofs and equivalent reformulations, see~\cite{Aigner95}. 
A dual version of Tur\'{a}n's bound, 
regarding the size of independent sets, 
was refined by Caro~\cite{Caro79} and Wei~\cite{Wei81}. 
Their result has subsequently been generalized, 
by replacing the independent sets with less 
restricted induced subgraphs~\cite{AKS87},
respectively by replacing the concept of a graph 
with more general notions, namely weighted graphs~\cite{STY03} 
and hypergraphs~\cite{CT91}. 
Here we complement these efforts by providing 
a generalization of the Caro--Wei bound to the case of digraphs.
From an algorithmic perspective, the new result
gauges a simple greedy heuristic for the minimum 
directed feedback vertex set problem. In this way, the main result 
of this paper yields a 
formalized counterpart to the intuition that the minimum (directed) 
feedback vertex number of sparse digraphs cannot be ``overly large''. 

\section{Preliminaries}
We assume the reader is familiar with basic notions in the 
theory of digraphs, as contained in textbooks 
such as~\cite{BG00}. 
Nevertheless, we briefly recall the most important 
notions in the following. 
 %digraph, vertices, arcs, order
A {\em digraph}~$D=(V,A)$ consists of a finite set, referred to 
as the set~$V(G) = V$ of {\em vertices}, and of an irreflexive binary 
relation on~$V(G)$, referred to as the set of 
{\em arcs}~$A(G) = A \subset V \times V$. The cardinality of the 
vertex set is referred to as the {\em order} of~$D$. 
 %graph (symmetric digraph)
In the special case where the arc relation of a digraph 
is symmetric, we also speak of an {\em (undirected) graph}.
 %degree 
For a vertex~$v$ in a digraph~$D$, define its 
{\em out-neighborhood} as $N^+(v) = \{u \in V \mid (u,v) \in A, u \neq v \}$,
and its {\em out-degree} as $d^+(v)=|N^+(v)|$. 
In-neighborhood and in-degree are defined analogously, 
and denoted by $N^-(v)$ 
and~$d^{-}(v)$, respectively. The {\em degree} $d(v)$ of~$v$
is then defined as $d(v) = |N^{-}(v) \cup N^{+}(v)|$, and 
the {\em total degree} of~$v$ is defined as $|N^{-}(v)| + |N^{+}(v)|$. 
We note that our definition of vertex degree agrees (on undirected 
graphs) with the standard usage of this notion in the theory of 
undirected graphs, see e.g.~\cite{Diestel06}.     
 %subdigraph induced by U
For a subset of vertices $U \subseteq V$ of the 
digraph $D=(V,A)$, the {\em subdigraph induced by~$U$} is the 
digraph~$(U,A|_{U \times U})$ obtained by reducing the 
vertex set to~$U$ and by restricting the arc set to 
the relation induced by~$A$ on~$U$. 
 %induced subdigraph
If a digraph~$H$ can be 
obtained in this way by appropriate restriction of the vertex set 
of the digraph~$D$, we say~$H$ is an {\em induced subdigraph} of~$D$.
 %path, cycle, acyclic, DAG
A {\em simple path} in a digraph is a sequence of $k\ge 1$ arcs
$(v_1,w_1)(v_2,w_2)\cdots(v_k,w_k)$, such that 
for all $1 \le i < k$ holds $w_i = v_{i+1}$ and 
all start-vertices~$v_i$ are distinct. 
If furthermore~$w_k=v_1$, we speak of a {\em cycle}. 
In particular, notice that  
each pair of opposite arcs $(v,w)(w,v)$ in a 
digraph amounts to a cycle. This convention is commonly 
used in the theory of digraphs, compare~\cite{BG00}.
 
A digraph containing no cycles is called {\em acyclic}, 
or a {\em directed acyclic graph (DAG)}.  
 %acyclic set, independent set
For a vertex subset~$U$ of a digraph~$D$, 
if the subdigraph induced by~$U$ is acyclic, 
then we call~$U$ an {\em acyclic set}. 
In particular, if $D[U]$ contains no arcs at all, then~$U$ 
is called an {\em independent set}. The maximum cardinality 
among all independent sets in~$D$ 
is called the {\em independence number} of~$D$. 
%Turan bound, Caro--Wei bound
Tur{\'a}n proved the following bound on the independence 
number of undirected graphs:
\begin{thm}
Let $D=(V,A)$ be an undirected graph of order~$n$ and of average 
degree~$\overline{d}$. Then~$D$ contains 
an independent set of size at least 
$\left(\overline{d} + 1\right)^{-1} \cdot n$. 
\end{thm}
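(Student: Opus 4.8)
The plan is to establish the stronger, degree-sequence–sensitive Caro--Wei inequality first, and then to deduce the averaged Tur\'an bound from it by a convexity argument. Concretely, I would show that $D$ contains an independent set of size at least $\sum_{v \in V} (d(v)+1)^{-1}$, and afterwards observe that, for a fixed order~$n$ and a fixed sum of degrees, this quantity is smallest when all degrees coincide with the average~$\overline{d}$.

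For the first step I would use the probabilistic method applied to a uniformly random vertex ordering. Let $\pi$ be a permutation of~$V$ chosen uniformly at random, and set
\[
  S = \{\, v \in V \mid v \text{ precedes all its neighbors in } \pi \,\}.
\]
The set~$S$ is independent: if two adjacent vertices both belonged to~$S$, then each would have to precede the other in~$\pi$, which is impossible. To bound its expected size, I would fix a vertex~$v$ and note that $v \in S$ holds exactly when $v$ is the first to appear in~$\pi$ among the $d(v)+1$ vertices consisting of $v$ together with its neighbors; by symmetry this event has probability $(d(v)+1)^{-1}$. Linearity of expectation then yields $\mathbb{E}[|S|] = \sum_{v \in V}(d(v)+1)^{-1}$, so at least one ordering produces an independent set of that size.

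The second step is the convexity argument. Since the map $t \mapsto (t+1)^{-1}$ is convex on the nonnegative reals, Jensen's inequality applied to the degree sequence gives
\[
  \frac{1}{n}\sum_{v \in V} \frac{1}{d(v)+1} \;\ge\; \frac{1}{\overline{d}+1},
\]
where $\overline{d} = n^{-1}\sum_{v \in V} d(v)$ is the average degree. Multiplying through by~$n$ turns the Caro--Wei bound into the claimed lower bound of $(\overline{d}+1)^{-1}\cdot n$ on the independence number.

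I expect the genuinely creative step to be the definition of the independent set~$S$ through a random ordering; everything following it---verifying independence, computing the per-vertex probability by symmetry, and invoking convexity---is routine. An alternative route avoiding probability would be a greedy induction that repeatedly deletes a minimum-degree vertex together with its neighbors, but controlling how the average degree evolves under such deletions is more delicate, so I would prefer the probabilistic argument, which also foreshadows the directed generalization pursued later.
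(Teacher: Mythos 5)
Your proof is correct, but it follows a genuinely different route from the paper. The paper states this theorem as a classical result without a standalone proof; its own machinery recovers it as a special case of the directed generalization: the deterministic greedy algorithm \emph{Min-Greedy} (repeatedly delete a minimum-outdegree vertex together with its out-neighborhood) produces an acyclic set of size at least $\sum_{v\in V}(d^+(v)+1)^{-1}$ by a charging argument in which each deleted block $N^+_{D_i}(v_i)\cup\{v_i\}$ carries total weight at most one; the average-degree form then follows from the harmonic--arithmetic mean inequality, and restricting to symmetric digraphs yields the undirected statement. Your first step replaces this greedy charging argument by the probabilistic method (a uniformly random vertex ordering, keeping the vertices that precede all their neighbors); your second step, Jensen's inequality for the convex map $t\mapsto(t+1)^{-1}$, is the same deduction the paper makes via the harmonic--arithmetic mean inequality, only phrased differently. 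As for what each approach buys: your argument is shorter, and---contrary to your closing remark---it extends to digraphs essentially verbatim, since the vertices preceding all of their \emph{out-neighbors} in a random ordering induce a DAG (positions in the ordering increase along every arc inside that set), with $\Pr[v\in S]=(d^+(v)+1)^{-1}$, so it would prove Theorem~\ref{thm:main} as well. The paper's greedy proof, by contrast, is constructive, and that is the point of the paper: it gauges an explicit heuristic for the minimum directed feedback vertex set problem. Finally, a small correction: your stated reason for preferring the probabilistic route over the greedy one (the difficulty of controlling how the average degree evolves under deletions) is not a real obstacle, because the greedy analysis works directly with the degree-sequence sum $\sum_{v}(d(v)+1)^{-1}$ rather than with the average degree, charging each deleted block a total weight of at most one, and never needs to track how the average degree changes.
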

Caro~\cite{Caro79}, and, independently, Wei~\cite{Wei81} 
proved the following refined bound: 
\begin{thm}
Let $D=(V,A)$ be an undirected graph of order~$n$. 
Then~$D$ contains an independent set  
of size at least $\sum_{v\in V} \left(d(v) +1\right)^{-1}$.
\end{thm}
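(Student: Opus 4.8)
The plan is to use the probabilistic method, specifically a random-ordering argument, which is also the approach best suited to the directed generalization the paper is heading toward. Since $D$ is undirected, the arc relation is symmetric, so $N^+(v)=N^-(v)$; write $N(v)$ for this common neighborhood, so that $d(v)=|N(v)|$. First I would draw a linear order $\pi$ of the vertex set $V$ uniformly at random among all $n!$ orderings, and define the candidate set
\[ I = \{\, v \in V \mid \pi(v) < \pi(u) \text{ for every } u \in N(v) \,\}, \]
that is, $I$ collects exactly those vertices that precede all of their neighbors in $\pi$.

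Next I would verify that $I$ is always \emph{independent}, irrespective of the chosen order. If two adjacent vertices $u,v$ both belonged to $I$, then membership of $v$ would force $\pi(v)<\pi(u)$ while membership of $u$ would force $\pi(u)<\pi(v)$, a contradiction; hence $D[I]$ contains no arc. The heart of the argument is then the computation of $\Pr[v \in I]$. The event $v \in I$ says precisely that $v$ is the $\pi$-minimum of the $(d(v)+1)$-element set $\{v\}\cup N(v)$. By symmetry of the uniform order, each of these $d(v)+1$ vertices is equally likely to come first, so $\Pr[v\in I] = (d(v)+1)^{-1}$. Linearity of expectation then gives $\mathbb{E}[|I|] = \sum_{v\in V}(d(v)+1)^{-1}$, and since a random variable attains at least its expectation for some outcome, there exists an ordering whose associated set $I$ is independent and has size at least $\sum_{v\in V}(d(v)+1)^{-1}$.

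The individual steps are short, so the main obstacle is conceptual rather than computational: choosing the right random object and recognizing that the condition ``$v$ precedes all its neighbors'' simultaneously guarantees independence and admits an exact, degree-local probability. An alternative route is induction on $n$, deleting a minimum-degree vertex together with its neighborhood and comparing the two sums; there the delicate point is a convexity estimate controlling how the residual degrees shrink. I prefer the probabilistic argument because it transfers almost verbatim to digraphs: replacing $N(v)$ by the out-neighborhood and keeping those $v$ that precede all their out-neighbors produces a set in which every induced arc runs forward in $\pi$, hence a set that is \emph{acyclic} rather than merely independent, with $\Pr[v\in I]=(d^+(v)+1)^{-1}$. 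This is exactly the directed feedback-vertex-set bound the paper is after.
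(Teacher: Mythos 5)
Your proof is correct, but it takes a genuinely different route from the paper's. The paper never proves the Caro--Wei bound directly: it proves the directed generalization (Theorem~\ref{thm:main}) by analyzing the deterministic greedy heuristic \emph{Min-Greedy} --- repeatedly pick a vertex $v_i$ of minimum outdegree in the current digraph $D_i$ and delete the block $\{v_i\} \cup N^+_{D_i}(v_i)$ --- and the key step is a charging argument: these blocks partition $V$, and each block contributes at most $1$ to $\sum_{v \in V}(d^+(v)+1)^{-1}$, because every $w$ in the block satisfies $d^+_D(w)+1 \ge d^+_{D_i}(v_i)+1 = \left|N^+_{D_i}(v_i)\cup\{v_i\}\right|$. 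The undirected statement is then recovered by restricting to symmetric digraphs, where acyclic sets coincide with independent sets and $d^+(v)=d(v)$. Your argument is instead the classical random-permutation proof: take $I$ to be the vertices preceding all their neighbors in a uniformly random order, check independence, compute $\Pr[v\in I]=(d(v)+1)^{-1}$, and apply linearity of expectation; all steps are sound, and your closing remark is also accurate --- keeping the vertices that precede all their \emph{out}-neighbors forces every arc induced on $I$ to run forward in $\pi$, so $I$ is acyclic and one gets exactly the bound of Theorem~\ref{thm:main}. The trade-off: your proof is shorter and transfers to digraphs with no change in structure, but it is existential (or randomized) unless derandomized, e.g., via conditional expectations; the paper's analysis is longer but constructive, and that is deliberate --- a stated goal of the paper is to gauge the performance of a concrete polynomial-time greedy heuristic for the minimum directed feedback vertex set problem, which the expectation argument by itself does not do.
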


 %feedback vertex number
A set~$F$ of vertices in a digraph~$D=(V,A)$ is called a 
{\em feedback vertex set} if $V \setminus F$ is an 
acyclic set. The {\em feedback vertex number} $\tau_0(D)$ 
of~$D$ is defined as the minimum cardinality among all 
feedback vertex sets for~$D$. A simple observation is 
that for a digraph~$D$ of order $n$, the cardinality of a 
maximum acyclic set equals $n-\tau_0$. 

\section{Directed Feedback Vertex Sets and Vertex Degrees}

Quite recently, several new algorithms were 
devised for exactly solving the minimum directed feedback vertex 
set problem~\cite{CLLOR08,Razgon07}.
But all known exact algorithms for this problem 
share the undesirable feature that
their worst-case running time is exponential---in the 
order~$n$ of the input graph, or at least in the size 
of the feedback vertex number~$\tau_0$. 
This is not surprising as the problem has been known for a long time 
to be $\NP$-complete, see~\cite{GJ79}. 

Here, we consider the following simple 
greedy heuristic for finding a large acyclic set, 
and hence a small feedback vertex set,  
in a digraph~$D$. We call the algorithm 
{\em Min-Greedy}, in accordance with a
homonymous greedy heuristic on undirected
graphs for finding a large independent set, 
compare~\cite{Griggs83,HR97}.

Starting with $D_1=D$, we inductively define a 
sequence of digraphs $D_i$, $i\ge 1$, by first 
choosing a vertex $v_i$, such that $v_i$ 
has minimum outdegree in $D_i$, and then 
deleting $v_i$, along with its out-neighborhood 
in~$D_i$, to obtain the digraph $D_{i+1}$. 
We proceed in doing so until the  
vertex set of $D_i$ is empty, and remember the 
vertices $v_i$ selected in each turn. These 
vertices form the set $S=\{v_1,v_2,\ldots v_r\}$, 
which is the result finally returned by 
the procedure.

Before we analyze the quality of the above heuristic, 
we shall first prove its soundness.
 %soundness
\begin{lem}\label{lem:soundness}
Let $D$ be a digraph. Then the set~$S$ 
returned by {\em Min-Greedy} on input $D$ 
is an acyclic set in~$D$.
\end{lem}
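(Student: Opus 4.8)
The plan is to prove that $S$ is acyclic by analyzing the structure of how vertices enter $S$ and showing that no cycle can live entirely within $S$. The key combinatorial fact to exploit is the order in which vertices are selected together with the deletion rule: when $v_i$ is chosen, we delete $v_i$ together with its entire out-neighborhood $N^+(v_i)$ in $D_i$. So for any later-selected vertex $v_j$ with $j > i$, the vertex $v_j$ survived the deletion of $v_i$, which means $v_j \notin N^+(v_i)$, i.e.\ there is no arc $(v_i, v_j)$ in $D_i$. Since the relevant arcs among surviving vertices are never altered by deletions (induced subdigraphs only restrict the arc set to the surviving vertices), this means there is no arc $(v_i, v_j)$ in the original digraph $D$ either.

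First I would make this observation precise: ordering the selected vertices as $v_1, v_2, \ldots, v_r$ in the sequence they were chosen, I claim that $D$ contains no arc $(v_i, v_j)$ with $i < j$. The justification is exactly the survival argument above — at step $i$, every out-neighbor of $v_i$ in $D_i$ is deleted, so any $v_j$ chosen strictly later cannot be an out-neighbor of $v_i$ in $D_i$; and since $v_i$ and $v_j$ both belong to $V(D_i)$, the arc $(v_i,v_j)$ would have to appear in the induced subdigraph $D_i$ if it appeared in $D$, which it does not.

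Next I would use this to rule out cycles. Suppose for contradiction that the induced subdigraph $D[S]$ contains a cycle $C$, using only vertices of $S$. Writing the vertices of $C$ in the induced ordering inherited from the selection sequence, let $v_m$ be the vertex on $C$ with the largest index $m$. The cycle $C$ contains an arc leaving $v_m$, say $(v_m, v_\ell)$ with $v_\ell$ the successor of $v_m$ on $C$. Since $v_\ell$ is also on $C$ and $v_m$ has the maximum index among cycle vertices, we have $\ell < m$. But then $(v_m, v_\ell)$ is an arc of $D$ from a later-selected vertex to an earlier-selected one — wait, this is an arc from higher index to lower index, which is \emph{not} forbidden by the observation above. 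I must instead orient the contradiction correctly: I should pick $v_m$ with the \emph{smallest} index on the cycle, and consider its incoming arc on $C$ from its predecessor $v_k$; then $k > m$, yet $C$ furnishes an arc $(v_k, v_m)$ — again from higher to lower. The observation forbids arcs from lower to higher index, so the correct move is to take the arc \emph{out} of the smallest-index vertex: $v_m$ has smallest index, its successor $v_\ell$ on $C$ satisfies $\ell > m$, giving an arc $(v_m, v_\ell)$ with $m < \ell$, which is exactly what the observation forbids. This contradiction establishes that no cycle can exist in $D[S]$, so $S$ is acyclic.

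The main obstacle is purely bookkeeping: ensuring the claim ``no arc $(v_i,v_j)$ for $i<j$'' is stated and applied with the orientation consistent throughout, and confirming that deletions preserve arcs among surviving vertices so that absence of an arc in $D_i$ transfers back to absence in $D$. Once the survival observation is nailed down, the acyclicity follows cleanly by locating the minimum-index vertex on a hypothetical cycle and reading off the forbidden forward arc; there is no delicate estimation, only careful handling of the induced-subdigraph semantics and the indexing convention.
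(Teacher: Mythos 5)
Your proposal is correct and takes essentially the same approach as the paper: the identical key claim that $D$ contains no arc $(v_i,v_j)$ with $i<j$, proved by the same survival argument (out-neighbors of $v_i$ in $D_i$ are deleted, and absence of the arc transfers back to $D$ by induced-subdigraph semantics), followed by deducing acyclicity from this index ordering. The only cosmetic difference is in the last step---the paper notes that indices must decrease along any simple path in $D[S]$, while you take the minimum-index vertex on a hypothetical cycle and read off a forbidden forward arc; these are the same argument, though in a final write-up you should excise the false start with the maximum-index vertex and state the contradiction directly.
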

\begin{proof}
Using the notions from the description of the algorithm,
it suffices to show that for all $v_j,v_k \in S$, the 
condition $j<k$ implies that the digraph~$D$ has 
no arc $(v_j,v_k)$. This claim implies 
that along every simple path in $D[S] = D[\{v_1,v_2,\ldots,v_r\}]$, 
the vertex indices must appear in 
decreasing order, thus ruling out the possibility 
of a cycle in~$D[S]$. 

To prove the claim, observe first that for all $k>j$, 
$D_k$ is an induced subdigraph of $D_{j+1}$. Thus starting 
with $D_{j+1} = D_j - (\{v_j\} \cup N^+(v_j)) $, 
no vertex in the out-neighborhood of $v_j$ is present 
in any of the subsequent digraphs. But $v_k$ is selected 
from~$G_k$, hence is present in $G_k$ and cannot be in the 
out-neighborhood of~$v_j$.
\end{proof}
Observe that the proof of Lemma~\ref{lem:soundness} does 
not depend on the choice of a vertex of minimum 
degree in~$D_i$ for~$v_i$---the algorithm is sound 
if we choose any vertex in~$D_i$ for~$v_i$. 
Now we are ready to state our main result.

\begin{thm}\label{thm:main}
Let $D=(V,A)$ be a digraph of order~$n$. 
Then {\em Min-Greedy} always finds an acyclic set in $D$ 
of size at least $\sum_{v\in V} \left(d^+(v)+1\right)^{-1}$.
\end{thm}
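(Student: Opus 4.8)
The plan is to prove this by analyzing the expected size of the set $S$ returned by Min-Greedy, but run on a randomly chosen vertex ordering—mirroring the elegant probabilistic proof of the Caro--Wei inequality. Since Lemma~\ref{lem:soundness} already guarantees that $S$ is acyclic regardless of which vertex is selected at each step, I am free to analyze a variant in which, instead of always picking a vertex of minimum outdegree, I pick vertices according to a uniformly random permutation $\pi$ of $V$. The deterministic min-outdegree rule can only do at least as well as this randomized rule, so a lower bound on the expected output size of the randomized variant transfers to Min-Greedy.

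First I would set up the randomized selection precisely. Draw a uniformly random linear order $\pi$ on $V$, and define the selected set $S_\pi$ by the greedy process that, among the vertices currently present in $D_i$, deletes the $\pi$-earliest one together with its out-neighborhood. The key structural observation is to characterize exactly which vertices end up in $S_\pi$. I claim that a vertex $v$ is selected precisely when $v$ comes before all of its out-neighbors $N^+(v)$ in the order $\pi$; equivalently, $v\in S_\pi$ iff $v$ is the $\pi$-minimum element of the set $\{v\}\cup N^+(v)$. The forward direction is the delicate point: if some out-neighbor $u$ of $v$ preceded $v$, I must argue that by the time $v$ could be picked, either $v$ has already been deleted (as an out-neighbor of some earlier-selected vertex) or $v$ is no longer $\pi$-earliest. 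The cleanest route is to argue by the order in which vertices are \emph{removed} and show that $v$ gets deleted before it can be selected whenever a closed-out-neighborhood element precedes it.

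Once this characterization is in hand, the computation is immediate: for each vertex $v$, the probability that $v$ is $\pi$-earliest among the $d^+(v)+1$ vertices in $\{v\}\cup N^+(v)$ is exactly $(d^+(v)+1)^{-1}$, since $\pi$ is uniform and each of those vertices is equally likely to come first. By linearity of expectation,
\begin{equation}
\mathbb{E}\bigl[|S_\pi|\bigr]=\sum_{v\in V}\Pr[v\in S_\pi]=\sum_{v\in V}\bigl(d^+(v)+1\bigr)^{-1}.
\end{equation}
Hence some ordering yields an acyclic set of at least this size, and because the min-outdegree greedy choice is a valid instantiation of the freedom noted after Lemma~\ref{lem:soundness} and is designed to delete as few vertices as possible at each step, its output matches or exceeds the random-order bound.

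I expect the main obstacle to be the characterization step, specifically proving that $v$ is selected \emph{only if} it precedes all its out-neighbors. The subtlety is that $v$ might be removed not because it was selected, but because it lay in the out-neighborhood of some earlier-selected vertex $w$; I need to ensure this removal does not spuriously let a vertex with a preceding out-neighbor survive into $S_\pi$. The careful argument is to process vertices in $\pi$-order and track, for each $v$, whether at the moment $v$ becomes the $\pi$-earliest surviving vertex it is still present—and to show that a preceding out-neighbor forces $v$'s earlier deletion. If a fully rigorous order-of-removal argument proves cumbersome, an alternative is a direct deterministic double-counting or weighting argument that assigns weight $(d^+(v)+1)^{-1}$ and shows the greedy step's deletions never exceed the total weight removed; but I regard the permutation argument as the most transparent and the one most faithful to the Caro--Wei original.
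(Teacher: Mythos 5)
There is a genuine gap, and it sits exactly at the step you flagged---but the situation is worse than a delicate detail: your key characterization is false in \emph{both} directions, and the direction your expectation computation needs is the false one. Consider the digraph on $\{a,b,c\}$ whose only arcs give $N^+(a)=\{b\}$, $N^+(b)=\{c\}$, $N^+(c)=\emptyset$ (a directed path, in the deletion semantics of the algorithm). Vertex $c$ has $d^+(c)=0$, so it is trivially the $\pi$-minimum of its closed out-neighborhood for every $\pi$; your claim then predicts $\Pr[c\in S_\pi]=1$. But whenever $b$ precedes both $a$ and $c$, the selection of $b$ deletes $c$, so in fact $\Pr[c\in S_\pi]=2/3$. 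Similarly, under $a\prec b\prec c$ the vertex $b$ is the $\pi$-minimum of $\{b,c\}$ yet is never selected (it is deleted as an out-neighbor of $a$); and conversely, with $N^+(v)=\{u\}$, $N^+(u)=\emptyset$ and $u\prec v$, the vertex $v$ \emph{is} selected although its out-neighbor $u$ precedes it. The correct sufficient condition for $v$ to be selected is that $v$ precedes every vertex $w$ whose selection can delete $v$, i.e.\ every $w$ with $v\in N^+(w)$---and those are the \emph{in}-neighbors of $v$, not its out-neighbors. Repairing your argument accordingly yields $\mathbb{E}\bigl[|S_\pi|\bigr]\ge\sum_{v\in V}\bigl(d^-(v)+1\bigr)^{-1}$, an \emph{in-degree} bound. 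In the undirected case $d^+=d^-$, which is why the classical permutation proof of Caro--Wei never meets this asymmetry; in the directed case the two bounds are incomparable, and the theorem to be proved is the out-degree one. (If you instead define $T_\pi=\{v : v \prec_\pi u \text{ for all } u\in N^+(v)\}$ directly, without the greedy process, then $T_\pi$ is acyclic---a cycle would force an infinite $\pi$-descent---and $\mathbb{E}\bigl[|T_\pi|\bigr]=\sum_{v\in V}(d^+(v)+1)^{-1}$, which proves \emph{existence} of a large acyclic set; but it says nothing about what Min-Greedy finds.)

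The second gap is the transfer step: you assert that the min-outdegree rule ``can only do at least as well'' as the random-order rule, but no argument is given, and none is obvious; ``designed to delete as few vertices as possible'' is a heuristic, not a domination principle, and a bound on an expectation over orders does not by itself bound the output of one particular deterministic run (with possibly adversarial tie-breaking). The paper avoids both problems with a per-step charging argument in which the min-outdegree choice is used exactly once: at step $i$, every deleted vertex $w\in N^+_{D_i}(v_i)\cup\{v_i\}$ satisfies $d^+_D(w)+1\ge d^+_{D_i}(w)+1\ge d^+_{D_i}(v_i)+1=\bigl|N^+_{D_i}(v_i)\cup\{v_i\}\bigr|$, so each deleted batch has total weight $\sum_{w}\bigl(d^+_D(w)+1\bigr)^{-1}\le 1$; since the batches partition $V$ and there are $|S|$ of them, $\sum_{v\in V}\bigl(d^+(v)+1\bigr)^{-1}\le|S|$. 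This is precisely the ``deterministic weighting argument'' you mention as a fallback---but it is not a fallback, it is the proof, and it is the minimum-outdegree selection (absent from your randomized variant) that makes the weights work.
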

\begin{proof}
Using the notation from the algorithm, 
let $v_i$ be the selected vertex of 
minimum outdegree in $D_i$. Then for all vertices 
$w\in N^+_{D_i}(v_i) \cup \{v_i\}$ holds 
\begin{align*}
d^+_{D}(w)+1 \ge d^+_{D_i}(w)+1 \ge d^+_{D_i}(v_i)+1 = \left|N^+_{D_i}(v_i) \cup \{v_i\}\right|.
\end{align*}  
Thus, 
\begin{align*}
\sum_{w\in N^+_{D_i}(v_i) \cup \{v_i\}}\left(d^+_{D}(w)+1\right)^{-1} 
& \le \sum_{w\in N^+_{D_i}(v_i) \cup \{v_i\}}\left(d^+_{D_i}(w)+1\right)^{-1} \\
& \le \sum_{w\in N^+_{D_i}(v_i) \cup \{v_i\}}\left(\left|N^+_{D_i}(v_i) \cup \{v_i\}\right|\right)^{-1} = 1. 
\end{align*}
On the other hand, since the algorithm partitions the vertex set 
of~$D$ into a disjoint union of subsets as 
\[V(D) = \bigcup_{i=1}^{|S|} \left(N^+_{D_i}(v_i) \cup \{v_i\}\right),\] 
we have 
\begin{align*}
\sum_{v\in V} \left(d^+_D(v)+1\right)^{-1} & = \sum_{i=1}^{|S|}\sum_{w\in N^+_{D_i}(v_i) \cup \{v_i\}}\left(d^+_{D}(w)+1\right)^{-1}
\le \sum_{i=1}^{|S|} 1 = |S|, 
\end{align*}
as desired.
\end{proof}

Just like the Caro--Wei bound~\cite{Caro79,Wei81} for the independence number 
of undirected graphs implies the Tur\'{a}n bound~\cite{Turan41} by the 
inequality of arithmetic and harmonic means, we 
have the following simple bound on the size of a maximum 
acyclic set, and hence, on the directed feedback vertex number, 
in terms of average outdegree: 

\begin{cor}\label{cor:ollary}
Let $D=(V,A)$ be a digraph of order~$n$ and of average 
outdegree $\overline{d^+}$. Then
\[ \tau_0(D) \le n\cdot\left(\,1 - \left(\,\overline{d^+}+1\right)^{-1}\,\right). \]
\end{cor}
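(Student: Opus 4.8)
The plan is to derive the corollary directly from Theorem~\ref{thm:main} by applying the inequality between the arithmetic and harmonic means to the outdegree-based lower bound. First I would recall that, by Theorem~\ref{thm:main}, the digraph $D$ contains an acyclic set of size at least $\sum_{v\in V}\left(d^+(v)+1\right)^{-1}$. Since the feedback vertex number $\tau_0(D)$ equals $n$ minus the size of a maximum acyclic set, it suffices to show that $\sum_{v\in V}\left(d^+(v)+1\right)^{-1} \ge n\cdot\left(\overline{d^+}+1\right)^{-1}$, because then $\tau_0(D) \le n - n\cdot\left(\overline{d^+}+1\right)^{-1}$, which is exactly the claimed bound.

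The key step is the AM--HM inequality applied to the $n$ positive quantities $d^+(v)+1$ for $v\in V$. The harmonic mean of these quantities is $n \cdot \left(\sum_{v\in V}\left(d^+(v)+1\right)^{-1}\right)^{-1}$, and the arithmetic mean is $\frac{1}{n}\sum_{v\in V}\left(d^+(v)+1\right) = \overline{d^+}+1$, where I use that the average outdegree $\overline{d^+}$ is by definition $\frac{1}{n}\sum_{v\in V}d^+(v)$. Since the harmonic mean never exceeds the arithmetic mean, I obtain
\begin{align*}
n \cdot \left(\sum_{v\in V}\left(d^+(v)+1\right)^{-1}\right)^{-1} \le \overline{d^+}+1,
\end{align*}
and rearranging this gives precisely $\sum_{v\in V}\left(d^+(v)+1\right)^{-1} \ge n\cdot\left(\overline{d^+}+1\right)^{-1}$, as required.

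There is essentially no serious obstacle here, since both the statement of Theorem~\ref{thm:main} and the AM--HM inequality do the heavy lifting; the only point that demands any care is the direction of the inequalities and the reciprocation. Specifically, when I pass from the bound on the harmonic mean to the bound on $\sum_{v\in V}\left(d^+(v)+1\right)^{-1}$, I must invert both sides, which reverses the inequality in the correct way precisely because all quantities involved are strictly positive. Once this bookkeeping is in place, substituting into $\tau_0(D) = n - (\text{maximum acyclic set size})$ and using the lower bound on the maximum acyclic set size yields the corollary.
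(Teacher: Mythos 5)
Your proposal is correct and follows essentially the same route as the paper: both deduce the bound from Theorem~\ref{thm:main} via the inequality between the harmonic and arithmetic means of the quantities $d^+(v)+1$, together with the identity $\tau_0(D)=n-(\text{maximum acyclic set size})$. The only cosmetic difference is that the paper invokes the full harmonic--geometric--arithmetic mean chain and uses just its outermost inequality, whereas you apply AM--HM directly.
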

\begin{proof}
We show the equivalent statement that the digraph~$D$ 
has an acyclic set of cardinality
at least $n / \left(\,\overline{d^+}+1\right)$. The bound 
$\sum_{v\in V} \left(d^+(v)+1\right)^{-1}$ from Theorem~\ref{thm:main} 
looks different at first glance. Nevertheless, it easily implies a
bound in terms of average outdegree: 
recall that the {\em inequality of the harmonic, geometric and 
arithmetic mean} (see~\cite[Chapter~16]{AZ01}) 
states that the geometric mean of~$n$ 
positive numbers $a_1,a_2,\ldots,a_n$ is sandwiched between 
the harmonic mean and the arithmetic mean of these numbers, 
that is, 
\[\frac{n}{\sum_{i=1}^n a_i^{-1}} \le \left(\prod_{i=1}^n a_i \right)^{1/n} \le \sum_{i=1}^n a_i/n.  \]
Now choose the $a_i$ to be the vertex degrees in~$D$ 
increased by~$1$ each. Then the outermost inequality yields
$\frac{n}{\sum_{v\in V} \left(d^+(v)+1\right)^{-1}} \le \sum_{v\in V}(d^+(v)+1)/n$. 
A very simple calculation completes the proof. 
\end{proof}
Both the bound from Theorem~\ref{thm:main}
and the one from Corollary~\ref{cor:ollary} 
are sharp, as witnessed, for example, 
by the digraph of order~$k \cdot m$ that is obtained as the 
disjoint union of~$m$ many $k$-cliques.

Notice that we obtain the Caro--Wei bound and the Tur\'{a}n bound, 
respectively, if we restrict the scope of the 
above statements to symmetric digraphs: for these, the 
size of a maximum acyclic set is equal to the independence 
number, and the outdegree of each vertex is equal to its degree 
(which in turn is equal to half its total degree).

As a final note, we remark that the Moon--Moser bound 
on the number of maximal independent sets in 
undirected graphs~\cite{MM65} 
does not generalize to an analogous statement 
about maximal acyclic sets; as a matter of fact, 
not even tournaments allow for a clear 
generalization~\cite{Moon71}. In the undirected 
case, the proofs of both the Caro--Wei bound and 
the Moon--Moser bound can be used to derive 
Tur\'{a}n's graph theorem, see~\cite{Aigner95}. 
A general theme for further research is to identify
those fragments of the theory of 
undirected graphs that generalize smoothly to the 
case of digraphs. 

%\section*{References}


\begin{thebibliography}{19}
\providecommand{\natexlab}[1]{#1}
\providecommand{\url}[1]{\texttt{#1}}
\providecommand{\urlprefix}{URL }
\expandafter\ifx\csname urlstyle\endcsname\relax
  \providecommand{\doi}[1]{doi:\discretionary{}{}{}#1}\else
  \providecommand{\doi}[1]{doi:\discretionary{}{}{}\begingroup
  \urlstyle{rm}\url{#1}\endgroup}\fi
\providecommand{\bibinfo}[2]{#2}

\bibitem[{Aigner(1995)}]{Aigner95}
\bibinfo{author}{M.~Aigner}, \bibinfo{title}{Tur\'{a}n's Graph Theorem},
  \bibinfo{journal}{The American Mathematical Monthly}
  \bibinfo{volume}{102}~(\bibinfo{number}{9}) (\bibinfo{year}{1995})
  \bibinfo{pages}{808--816}.

\bibitem[{Aigner and Ziegler(2001)}]{AZ01}
\bibinfo{author}{M.~Aigner}, \bibinfo{author}{G.~M. Ziegler},
  \bibinfo{title}{Proofs from {THE BOOK}}, \bibinfo{publisher}{Springer},
  \bibinfo{edition}{2nd} edn., \bibinfo{year}{2001}.

\bibitem[{Alon et~al.(1987)Alon, Kahn, and Seymour}]{AKS87}
\bibinfo{author}{N.~Alon}, \bibinfo{author}{J.~Kahn}, \bibinfo{author}{P.~D.
  Seymour}, \bibinfo{title}{Large Induced Degenerate Subgraphs},
  \bibinfo{journal}{Graphs and Combinatorics}
  \bibinfo{volume}{3}~(\bibinfo{number}{1}) (\bibinfo{year}{1987})
  \bibinfo{pages}{203--211}.

\bibitem[{Caro(1979)}]{Caro79}
\bibinfo{author}{Y.~Caro}, \bibinfo{title}{New Results on the Independence
  Number}, \bibinfo{type}{Tech. Rep.}, \bibinfo{institution}{Tel Aviv
  University}, \bibinfo{year}{1979}.

\bibitem[{Caro and Tuza(1991)}]{CT91}
\bibinfo{author}{Y.~Caro}, \bibinfo{author}{{\relax Zs}.~Tuza},
  \bibinfo{title}{Hypergraph Coverings and Local Colorings},
  \bibinfo{journal}{Journal of Combinatorial Theory, Series B}
  \bibinfo{volume}{52}~(\bibinfo{number}{1}) (\bibinfo{year}{1991})
  \bibinfo{pages}{79--85}.

\bibitem[{Chen et~al.(2008)Chen, Liu, Lu, O'Sullivan, and Razgon}]{CLLOR08}
\bibinfo{author}{J.~Chen}, \bibinfo{author}{Y.~Liu}, \bibinfo{author}{S.~Lu},
  \bibinfo{author}{B.~O'Sullivan}, \bibinfo{author}{I.~Razgon},
  \bibinfo{title}{A Fixed-Parameter Algorithm for the Directed Feedback Vertex
  Set Problem}, \bibinfo{journal}{Journal of the ACM}
  \bibinfo{volume}{55}~(\bibinfo{number}{5}) (\bibinfo{year}{2008})
  \bibinfo{pages}{Article No. 21}.

\bibitem[{Diestel(2006)}]{Diestel06}
\bibinfo{author}{R.~Diestel}, \bibinfo{title}{Graph Theory}, vol.
  \bibinfo{volume}{173} of \emph{\bibinfo{series}{Graduate Texts in
  Mathematics}}, \bibinfo{publisher}{Springer}, \bibinfo{edition}{3rd} edn.,
  \bibinfo{year}{2006}.

\bibitem[{Garey and Johnson(1979)}]{GJ79}
\bibinfo{author}{M.~R. Garey}, \bibinfo{author}{D.~S. Johnson},
  \bibinfo{title}{Computers and Intractability: A Guide to the Theory of
  NP-Completeness}, A Series of Books in the Mathematical Sciences,
  \bibinfo{publisher}{W. H. Freeman}, \bibinfo{year}{1979}.

\bibitem[{Griggs(1983)}]{Griggs83}
\bibinfo{author}{J.~R. Griggs}, \bibinfo{title}{Lower Bounds on the
  Independence Number in Terms of the Degrees}, \bibinfo{journal}{Journal of
  Combinatorial Theory, Series B} \bibinfo{volume}{34}~(\bibinfo{number}{1})
  (\bibinfo{year}{1983}) \bibinfo{pages}{22--39}.

\bibitem[{Gutin and Bang-Jensen.(2000)}]{BG00}
\bibinfo{author}{G.~Gutin}, \bibinfo{author}{J.~Bang-Jensen.},
  \bibinfo{title}{Digraphs: Theory, Algorithms and Applications},
  \bibinfo{publisher}{Springer}, \bibinfo{edition}{1st} edn.,
  \bibinfo{year}{2000}.

\bibitem[{Halld{\'o}rsson and Radhakrishnan(1997)}]{HR97}
\bibinfo{author}{M.~M. Halld{\'o}rsson}, \bibinfo{author}{J.~Radhakrishnan},
  \bibinfo{title}{Greed is Good: Approximating Independent Sets in Sparse and
  Bounded-Degree Graphs.}, \bibinfo{journal}{Algorithmica}
  \bibinfo{volume}{18}~(\bibinfo{number}{1}) (\bibinfo{year}{1997})
  \bibinfo{pages}{145--163}.

\bibitem[{Moon(1971)}]{Moon71}
\bibinfo{author}{J.~W. Moon}, \bibinfo{title}{On Maximal Transitive
  Subtournaments}, \bibinfo{journal}{Proceedings of the Edinburgh Mathematical
  Society (Series 2)} \bibinfo{volume}{17}~(\bibinfo{number}{4})
  (\bibinfo{year}{1971}) \bibinfo{pages}{345--349}.

\bibitem[{Moon and Moser(1965)}]{MM65}
\bibinfo{author}{J.~W. Moon}, \bibinfo{author}{L.~Moser}, \bibinfo{title}{On
  Cliques in Graphs}, \bibinfo{journal}{Israel Journal of Mathematics}
  \bibinfo{volume}{3}~(\bibinfo{number}{1}) (\bibinfo{year}{1965})
  \bibinfo{pages}{23--28}.

\bibitem[{Poincar\'{e}(1908)}]{Poincare08}
\bibinfo{author}{H.~Poincar\'{e}}, \bibinfo{title}{Science et M\'{e}thode},
  \bibinfo{publisher}{Flammarion}, \bibinfo{year}{1908}.

\bibitem[{Razgon(2007)}]{Razgon07}
\bibinfo{author}{I.~Razgon}, \bibinfo{title}{Computing Minimum Directed
  Feedback Vertex Set in {$O^*(1.9977^n)$}}, in: \bibinfo{editor}{G.~F.
  Italiano}, \bibinfo{editor}{E.~Moggi}, \bibinfo{editor}{L.~Laura} (Eds.),
  \bibinfo{booktitle}{Proceedings of the 10th Italian Conference on Theoretical
  Computer Science}, \bibinfo{publisher}{World Scientific},
  \bibinfo{pages}{70--81}, \bibinfo{year}{2007}.

\bibitem[{Sakai et~al.(2003)Sakai, Togasaki, and Yamazaki}]{STY03}
\bibinfo{author}{S.~Sakai}, \bibinfo{author}{M.~Togasaki},
  \bibinfo{author}{K.~Yamazaki}, \bibinfo{title}{A Note on Greedy Algorithms
  for the Maximum Weighted Independent Set Problem}, \bibinfo{journal}{Discrete
  Applied Mathematics} \bibinfo{volume}{126}~(\bibinfo{number}{2-3})
  (\bibinfo{year}{2003}) \bibinfo{pages}{313--322}.

\bibitem[{Tur\'{a}n(1941)}]{Turan41}
\bibinfo{author}{P.~Tur\'{a}n}, \bibinfo{title}{On an Extremal Problem in Graph
  Theory (in {Hungarian})}, \bibinfo{journal}{Matematikai {\'e}s Fizikai Lapok}
  \bibinfo{volume}{48} (\bibinfo{year}{1941}) \bibinfo{pages}{436--452}.

\bibitem[{Wei(1981)}]{Wei81}
\bibinfo{author}{V.~K.~W. Wei}, \bibinfo{title}{A Lower Bound on the Stability
  Number of a Simple Graph}, \bibinfo{howpublished}{Technical Memorandum
  No.~81-11217-9, Bell Laboratories}, \bibinfo{year}{1981}.

\end{thebibliography}
\end{document}